\documentclass[11pt,a4]{article}
\usepackage{graphicx,amssymb,amsmath,amsthm,textcomp}
\usepackage{fullpage}
\usepackage{algorithm}
\usepackage{algorithmic}
\usepackage{xcolor}
\usepackage{mathtools}

\DeclarePairedDelimiter{\ceil}{\lceil}{\rceil}
\usepackage{lineno,hyperref}
\usepackage{authblk}

\setlength{\parskip}{1.3mm} 
\setlength{\parindent}{0pt}
\newtheorem{theorem}{Theorem}[section]

\title{On $d$-distance $m$-tuple ($\ell, r$)-domination in graphs}
\author[1]{Sangram K. Jena}
\author[2]{Ramesh K. Jallu}
\author[1]{Gautam K. Das\thanks{Corresponding author}}
\affil[1]{Department of Mathematics, Indian Institute of Technology Guwahati \authorcr \{\tt sangram, gkd\}@iitg.ac.in}
\affil[2]{Institute of Computer Science, The Czech Academy of Sciences \authorcr \tt jallu@cs.cas.cz}
\date{}

\begin{document}
\maketitle

\vspace*{-1cm}

\begin{abstract}
In this article, we study the $d$-distance $m$-tuple ($\ell, r$)-domination problem. Given a simple undirected graph $G=(V, E)$, 
and positive integers $d, m, \ell$ and $r$, a subset $V' \subseteq V$ is said to be a $d$-distance $m$-tuple ($\ell, r$)-dominating 
set if it satisfies the following conditions: (i) each vertex $v \in V$ is $d$-distance dominated by at least $m$ vertices in $V'$, and 
(ii) each $r$ size subset $U$ of $V$ is $d$-distance dominated by at least $\ell$ vertices in $V'$. Here, a vertex $v$ is $d$-distance 
dominated by another vertex $u$ means the shortest path distance between $u$ and $v$ is at most $d$ in $G$. A set $U$ is $d$-distance dominated by a set of $\ell$ vertices means size of the union of the $d$-distance neighborhood of all vertices of $U$ in $V'$ is at least $\ell$. 
The objective of the $d$-distance $m$-tuple ($\ell, r$)-domination problem is to find a minimum size subset $V' \subseteq V$ 
satisfying the above two conditions.

We prove that the problem of deciding whether a graph $G$ has (i) a 1-distance $m$-tuple ($\ell, r$)-dominating set for each fixed value of $m, \ell$, and $r$, and (ii) a $d$-distance $m$-tuple  ($\ell, 2$)-dominating set for each fixed value of $d (> 1), m$, and $\ell$ of cardinality at most $k$ (here $k$ is a positive integer) are NP-complete. We also prove that for any $\varepsilon>0$, the 1-distance $m$-tuple $(\ell, r)$-domination problem and the $d$-distance $m$-tuple $(\ell,2)$-domination problem cannot be approximated within a factor of $(\frac{1}{2}- \varepsilon)\ln |V|$ and $(\frac{1}{4}- \varepsilon)\ln |V|$, respectively, unless $P = NP$.
\end{abstract}
\vspace{-0.50cm}
\section{Introduction}
\vspace{-0.20cm}
Given a simple undirected graph $G=(V,E)$, $\delta_G(v_i,v_j)$ denotes the length of a shortest path between the 
vertices $v_i$ and $v_j$ in $G$. For an integer $d >0$, the $d$-distance neighborhood of a vertex $v_i \in V$ is denoted by 
$N_G^d[v_i]$ and is defined as  $N_G^d[v_i] = \{v_j\in V \mid \delta_G(v_i,v_j) \leq d\}$. 
A \emph{$d$-distance $m$-tuple $(\ell,r)$-dominating set} (($d, m, \ell, r$) set for short) of $G$ is a subset $V'\subseteq V$ such that (i) for every $v_i \in V$, $|N^d_G[v_i]\cap V'| \geq m$, and (ii) $|(\cup_{u \in U} N^d_G[u]) \cap V'| \geq \ell$ for every $r$ size subset $U$ of $V$, where $d, m, r$, and $\ell$ are positive integers. 
If $m \geq \ell$, then the second condition in the definition of ($d, m, \ell, r$) set is redundant. In the case of 
$m = \ell (=k$, say), the ($d, m, \ell, r$) set is known as $k$-tuple dominating set in the literature. Note that, if $m = \ell$ then the value of $r>1$ is irrelevant. Therefore, we assume $r=1$ in case of $m=\ell$. From now onwards, 
we assume that $m \leq \ell$. If $d=1, m = 2, \ell = 3, r=2$ then $(d, m, \ell, r$) set is known as a liar's dominating 
set in the literature. The objective of the $d$-distance $m$-tuple $(\ell, r)$-domination problem is to find a minimum size  $d$-distance $m$-tuple $(\ell, r)$ dominating set in a given graph $G$, and we call this problem as the \emph{minimum ($d, m, \ell, r$) dominating set}  problem. In Figure \ref{fig:def}, the set of vertices $\{e,f,i\}$ form a 3-distance 2-tuple $(3, 4)$-dominating set for the graph.

\begin{figure}[!h]\vspace*{-0.25cm} 
\centering
\includegraphics[scale=0.6]{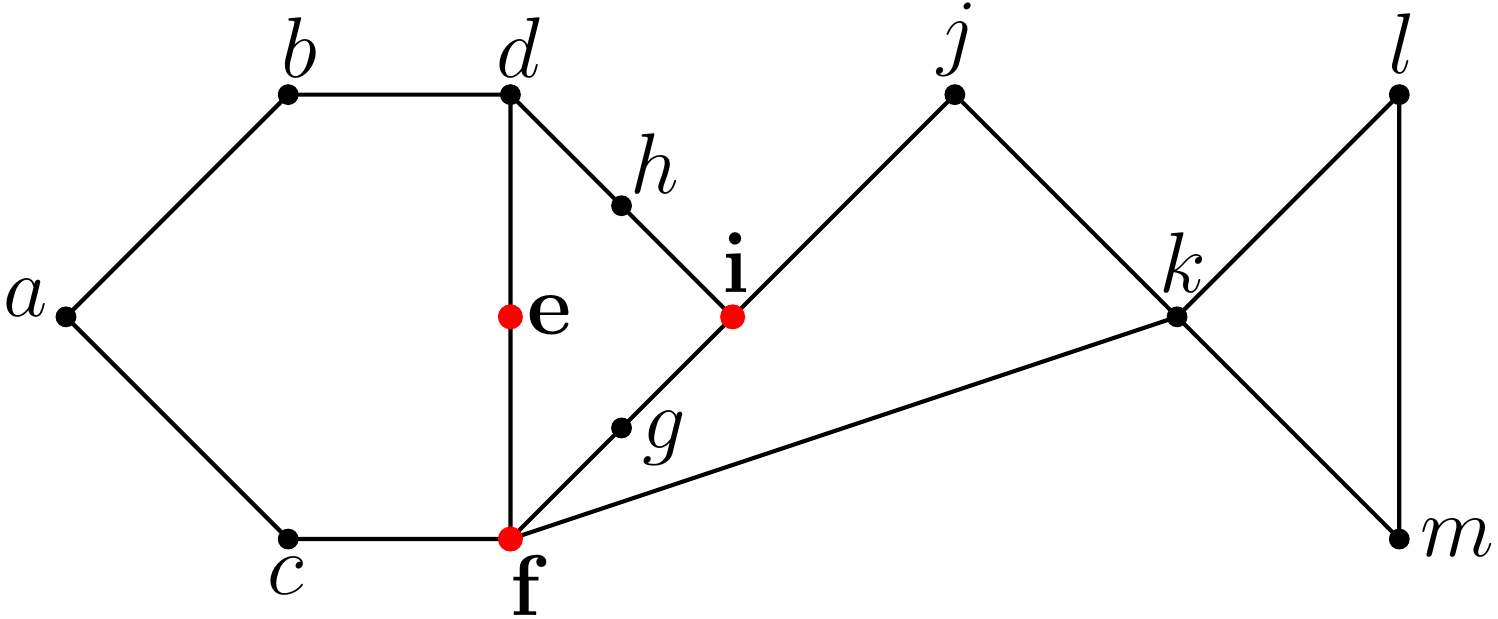}
\caption{The set $\{e,f,i\}$ is the 3-distance 2-tuple $(3, 4)$-dominating set.}
\label{fig:def}\vspace*{-0.25cm}
\end{figure}

Our interest in the problem arises from its important application such as fault tolerance in wireless/sensor networks. One specific real-time application is as follows for $r=2$. 
Suppose that in a graph $G=(V,E)$ each vertex is a possible location for an intruder such as a thief, a saboteur, a fire or some possible fault. Assume also that there is exactly $\min (\ell-m, \ceil{\ell/2}-1)$ intruders in the system represented by $G$. A protection device placed at a vertex $v$ is assumed to be able to (i) detect the intruder at any vertex in its $d$-distance neighborhood $N^d_G[v]$, and (ii) report the vertex $u \in N^d_G[v]$ at which the intruder is located. 
We are interested in deploying protection devices at a minimum number of 
vertices so that the intruder can be detected and identified correctly. 
This can be solved by finding a minimum cardinality $m$-tuple dominating set, say $D$, of $G$ and deploying protection devices at all the vertices of $D$. 
If any one protection device can fail to detect the intruder, then to correctly detect and identify the intruder one needs to place the protection devices at all the vertices of a minimum cardinality $2m$-tuple dominating set of $G$. Now it may so happen that all the protection devices detect the intruder location correctly but while reporting some of these protection devices can misreport or lie (either deliberately or through a transmission error) about the intruder location.
Assume that at most $\min (\ell-m, \ceil{\ell/2}-1)$ protection devices in the $d$-distance neighborhood of an intruder location can lie. 
Under these circumstances, to protect the network we have to install the protection devices at all the vertices of a minimum  $d$-distance $m$-tuple $(\ell, r)$ dominating set.

\vspace{-0.30cm}
\section{Related work}
\vspace{-0.20cm}
The domination problem is one of the most studied problem in the literature for its wide range of applications. Finding a minimum dominating set (MDS) in general graphs is  known to be NP-hard \cite{garey2002}. Raz and Safra \cite{raz1997sub} proved that there does not exist any approximation algorithm better than $O(\log n)$-factor unless P=NP.  The concepts of dominations and its variations are widely studied and can be seen in \cite{haynes208,haynes209}.

One of  the  variations  of  domination is the $k$-tuple domination problem and was introduced by Harary and Haynes \cite{harary}. When $k=1$, it is the usual domination problem.  For $k=2$, it is called double domination \cite{harary}. 
The same paper discusses exact values of  the  double domination numbers for  some  special graphs and various bounds  of  the  double  and  the $k$-tuple  domination numbers  in  terms  of  other  parameters. 
The hardness results and bounds for the $k$-tuple domination number for various sub-classes of graphs can be found in \cite{liao2003k,rautenbach2007new}. 

In 2009, Slater \cite{slater} first introduced 1-distance 2-tuple (3,2) domination problem known as the liar's dominating set (LDS) problem in the literature. The author proved that the problem is NP-hard for general graphs and proposed various bounds for trees, a subclass of trees, and graphs. The problem is also studied for different sub-classes of graphs and proved to be NP-hard  for  bipartite graphs  \cite{roden}, split graphs and chordal graphs \cite{panda2013liar}, doubly chordal graphs \cite{panda2015hardness},  whereas polynomially solvable in trees  \cite{panda2013liar}, block graphs \cite{panda2015hardness}, proper interval graphs \cite{paul2013}.  
Panda et al. \cite{panda2015hardness} studied the approximability of the problem and gave an $O(\ln \Delta)$-factor approximation algorithm, where $\Delta$ is the degree of the given graph. 
Alimadadi et al. \cite{alimadadi} provided the characterization of graphs and trees for which the LDS cardinality is $|V|$ and $|V|-1$, respectively. 
\vspace{-0.40cm}

\subsection{Our contribution} \label{contribution}
We prove that the problem of deciding whether a graph $G$ has a 1-distance $m$-tuple ($\ell, r$)-dominating set for each fixed value of $m, \ell$, and $r$ of cardinality at most $k$ is NP-complete (see Subsection \ref{hardness1}).
Next, we prove that the problem of deciding whether a graph $G$ has a $d$-distance $m$-tuple  ($\ell, 2$)-dominating set for each fixed value of $d (> 1), m$, and $\ell$ of cardinality at most $k$ is NP-complete (see Subsection \ref{hardness2}).
We also prove that for any $\varepsilon>0$, the 1-distance $m$-tuple $(\ell, r)$-domination problem and the $d$-distance $m$-tuple $(\ell,2)$-domination problem cannot be approximated within a factor of $(\frac{1}{2}- \varepsilon)\ln |V|$ and $(\frac{1}{4}- \varepsilon)\ln |V|$, respectively, unless $P = NP$ (see Section \ref{inapproximability}).
\section{Hardness Results} 
\subsection{Hardness of the 1-distance $m$-tuple $(\ell, r)$-domination problem}\label{hardness1}
In this section, we show that the decision version of the 1-distance $m$-tuple $(\ell, r)$-domination problem in graphs is NP-complete by reducing the {\it dominating set} (DS) problem to it, which is known to be NP-complete \cite{garey2002}.

The definition of the decision version of both the problems are as follows:
\begin{description}
\item [Decision version of 1-distance $m$-tuple $(\ell, r)$-domination problem:] 
\item[Instance:] A simple undirected graph $G = (V,E)$ with at least $\ell$ vertices and three positive integers $m$, $r$, and $k (\leq |V|)$, where $m \leq \ell$.\vspace{-0.2cm}
\item[Question:] Does $G$ has a 1-distance $m$-tuple $(\ell, r)$-dominating set of size at most $k$?
\end{description}
 
\begin{description}
\item [Decision version of the DS problem:]

\item [Instance:] A simple undirected graph $G=(V,E)$  and a positive integer $k$.
\vspace{-0.2cm}
\item [Question:] Does there exist a dominating set $D$ of $G$ such that $|D| \leq k$?
\end{description}

\begin{theorem}\label{thm:main1}
 The decision version of the \textsf{1-distance $m$-tuple $(\ell, r)$-domination} problem is NP-complete.
\end{theorem}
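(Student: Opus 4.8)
The plan is to first establish membership in NP and then prove hardness by a polynomial reduction from the \textsf{DS} problem. For membership, given a candidate $V'$ with $|V'|\le k$ one checks condition (i) by computing $|N^1_G[v]\cap V'|$ for each of the $|V|$ vertices, and condition (ii) by computing $|(\cup_{u\in U}N^1_G[u])\cap V'|$ for each of the $\binom{|V|}{r}=O(|V|^r)$ subsets $U$ of size $r$; since $r$ is fixed this is polynomial, so the problem lies in NP.

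For hardness I would start from a \textsf{DS} instance $(G,k)$ with $G=(V,E)$, $V=\{v_1,\dots,v_n\}$, and build $G'$ by leaving $G$ intact and attaching a collection of \emph{forcing gadgets}. To each $v_i$ I attach $m-1$ vertex-disjoint cliques $K_i^1,\dots,K_i^{m-1}$, each on $\ell$ fresh vertices, joining a single designated vertex (a \emph{connector}) of every $K_i^j$ to $v_i$ and leaving the remaining clique vertices adjacent only inside their own clique. The key \emph{forcing lemma} to prove is that every gadget vertex lies in every feasible $(1,m,\ell,r)$-dominating set: an $r$-subset taken inside $K_i^j$ and avoiding its connector has joint closed neighbourhood exactly $K_i^j$, so condition (ii) forces all $\ell$ vertices of $K_i^j$ into $V'$ (the clique size being chosen so that this argument applies; the regime $\ell\le r$ is handled by enlarging the cliques and invoking the $m$-tuple condition instead). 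Consequently each connector is forced and dominates $v_i$, so $v_i$ acquires exactly $m-1$ forced dominators.

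Because the only forced vertices inside $N_G[v_i]$ are its $m-1$ connectors, condition (i) at $v_i$ becomes $|N_G[v_i]\cap(V'\cap V)|\ge 1$, i.e.\ the original-graph part $V'\cap V$ must dominate $v_i$. Setting $k'=k+(m-1)\ell n$ (the total number of forced vertices plus $k$), I would prove equivalence in both directions: from a dominating set $D$ of $G$ with $|D|\le k$, the set $V'=D\cup\{\text{all gadget vertices}\}$ satisfies (i) (each $v_i$ gets its $m-1$ connectors plus a dominator, each gadget vertex gets its whole $\ell$-clique) and has size $\le k'$; conversely, the forcing lemma shows that any feasible $V'$ with $|V'|\le k'$ contains all gadget vertices, whence $V'\cap V$ is a dominating set of size $\le k$.

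The hard part will be verifying condition (ii) for \emph{every} $r$-subset $U$ of $V(G')$, not just those living inside a single gadget. Subsets meeting a gadget deeply are easy, since a whole forced $\ell$-clique sits in the joint neighbourhood; the delicate case is an $r$-subset of original vertices $\{v_{i_1},\dots,v_{i_r}\}$, whose joint neighbourhood contains the $r(m-1)$ distinct connectors together with whatever selected original vertices dominate them. Condition (i) guarantees at least one selected original vertex in the joint neighbourhood, giving a count of at least $r(m-1)+1$, so condition (ii) follows from condition (i) exactly when $r(m-1)+1\ge\ell$. This inequality holds throughout the natural range of the parameters (for instance it is tight for the liar's case $m=2,\ell=3,r=2$); outside it one must absorb the shortfall $\ell-r(m-1)$ by augmenting each vertex with additional forced neighbours while preserving the count of exactly $m-1$ forced dominators per original vertex. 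Getting this balance right---forcing enough structure for condition (ii) without over-dominating the original vertices and thereby decoupling condition (i) from domination---is the main technical obstacle of the proof.
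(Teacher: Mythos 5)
There is a genuine gap, and you have in fact located it yourself. Your reduction encodes domination through condition (i), giving each original vertex exactly $m-1$ forced dominators (its connectors); the price is paid at condition (ii). For an $r$-subset of original vertices the joint neighbourhood is only guaranteed to contain $r(m-1)+1$ selected vertices, so the forward direction of your equivalence (feasibility of $D$ together with all gadget vertices) fails whenever $\ell > r(m-1)+1$ --- already for $(m,\ell,r)=(2,4,2)$, and catastrophically for $m=1$, where your construction attaches no gadgets at all, so $G'=G$ and the claimed equivalence is simply false. Since the theorem asserts NP-completeness for \emph{every} fixed $m\le\ell$ and $r$, these parameter ranges cannot be set aside. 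Moreover, the patch you sketch --- ``augmenting each vertex with additional forced neighbours while preserving the count of exactly $m-1$ forced dominators per original vertex'' --- is self-contradictory as stated: any forced vertex adjacent to $v_i$ \emph{is} a forced dominator of $v_i$, so per-vertex attachments can never raise the pair counts needed for condition (ii) without destroying the count of $m-1$ that makes condition (i) equivalent to domination. (Your other caveat, the regime $\ell\le r$ in which no $r$-subset fits inside an $\ell$-clique, is comparatively benign and repairable by enlarging the cliques; the shortfall $\ell>r(m-1)+1$ is the fatal one.) As written, your argument covers only the range $r\le\ell-1\le r(m-1)$, so it does not prove the theorem.

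The paper escapes this trap by making the forced structure \emph{shared} rather than per-vertex, and by letting condition (ii), not condition (i), carry the domination constraint. It joins a single clique $V^2$ of $\ell-1$ vertices to every vertex of the copy $V^1$ of $G$, and attaches $r$ further vertices $V^3$ adjacent to all of $V^2$. Condition (i) then holds essentially for free, while condition (ii) applied to an $r$-subset of $V^1$ sees the $\ell-1$ shared vertices plus whatever genuine dominators were selected, hence forces at least one genuine dominator per $r$-subset. This is weaker than full domination, which is why the paper's converse direction needs an additional counting argument: if $D=L\cap V^1$ leaves $s$ vertices of $V$ undominated, then necessarily $s\le r-1$, and $L$ must contain at least $\ell+s$ vertices of $V^2\cup V^3$, so $|D|\le k-s$ and adding the $s$ undominated vertices to $D$ yields a dominating set of size at most $k$. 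That single construction and slack argument work uniformly for all $m\le\ell$ and all $r$, which is exactly what your per-vertex gadgets cannot achieve.
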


\begin{proof}
 For any given set $L\subseteq V$ and a positive integer $k$, we can verify whether $L$ is a 1-distance $m$-tuple 
$(\ell, r)$-dominating set of size at most $k$ or not in polynomial time by checking both the conditions of 
1-distance $m$-tuple $(\ell, r)$-dominating set. Therefore, 1-distance $m$-tuple $(\ell, r)$-domination problem is in NP.

Now, we prove the hardness of the 1-distance $m$-tuple $(\ell, r)$-domination problem by reducing the decision version of the \textsc{DS} problem, which is known to be NP-complete \cite{garey2002}, to it.
Let $<G=(V,E),k>$ be an instance of the dominating set problem, where $G=(V,E)$ is an undirected graph with vertex set $V=\{v_1,v_2, \ldots ,v_n\}$ and $k$ is an integer. 
We construct an instance $<G'=(V',E'),m,\ell, r>$ of the decision version of 1-distance $m$-tuple $(\ell, r)$-domination problem as follows: 

\begin{minipage}{.5\textwidth}
\begin{center}
\begin{align*}
V'&=V^1\cup V^2\cup V^3, \text{where}\\ 
V^1&=\{v_1^1,v_2^1, \ldots ,v_n^1\},\\
V^2&=\{v_1^2,v_2^2, \ldots v_{\ell-1}^2\},\\
V^3&=\{v_1^3,v_2^3, \ldots, v_r^3\}\\
\end{align*}
\end{center}
\end{minipage}%
\begin{minipage}{.5\textwidth}
\begin{center}
\begin{align*}
 E'&=E^1\cup E^2\cup E^3\cup E^4, \text{where}\\ 
 E^1&=\{(v_i^1,v_j^1)\mid (v_i,v_j) \in E\},\\
 E^2&=\{(v_i^2,v_j^2)\mid 1\leq i<j\leq \ell-1\},\\
 E^3&=\{(v_i^1,v_j^2)\mid 1\leq i \leq n, 1\leq j\leq \ell-1\},\\
 E^4&=\{(v_i^2,v_j^3)\mid 1\leq i \leq \ell-1, 1 \leq j \leq r\}  
\end{align*}
\end{center}
\end{minipage}%

Observe that, $G'=(V',E')$ can be constructed in polynomial time and $|V'|=n+\ell+r-1$, where $n=|V|$ and $\ell, r <n$. 
An illustration for the construction of $G'$ from $G$ is shown in Figure \ref{hardnessfig1}(a).

\begin{figure}[!t]
\centering
\includegraphics[scale=0.75]{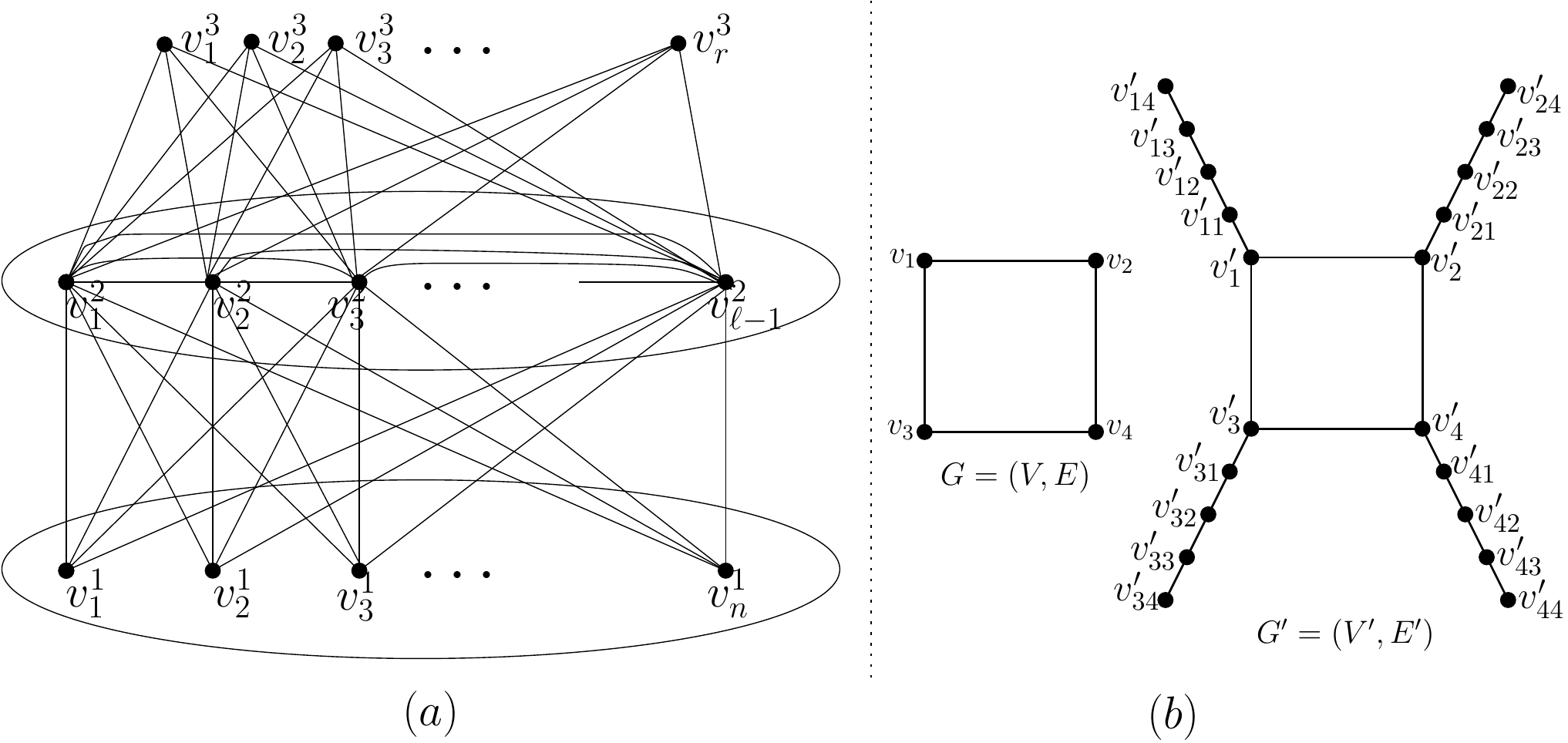}
\caption{(a) A graph $G'=(V',E')$ constructed for an instance of the 1-distance $m$-tuple $(\ell,r)$-domination problem, and (b) a graph $G'=(V',E')$ constructed for an instance of the $d$-distance $m$-tuple $(\ell,2)$- domination problem.}\label{hardnessfig1} \vspace{-0.40cm}
\end{figure}

{\bf Claim 1:} {\it $G$ has a dominating set of size at most $k$ if and only if $G'$ has a 1-distance $m$-tuple $(\ell, r)$-dominating set of size at most $k+\ell$}.

{\it \bf Proof:} Let $D$ be a dominating set of $G$ and $|D|\leq k$. 
Let $L=\{v_i^1 \mid v_i\in D\}\cup V^2\cup \{v_1^3\}$. 
Now, we show that $L$ is a 1-distance $m$-tuple $(\ell, r)$-dominating set in $G'$.

{\bf (i)} Observe that for each $v \in V'$, $|N^1_{G'}[v] \cap L| \geq m$ as $m \leq \ell$ (value of $r=1$ in case of $m = \ell$) 
and each $v \in V'$ is dominated by $\ell-1$ vertices in $V^2$.

{\bf (ii)} Let $U = \{u_1, u_2, \ldots, u_r\} \subseteq V'$ be an arbitrary subset of size $r$. 

{\bf Case 1:} Let $U \cap V^2 \neq \emptyset$ and $v_i^2 \in U \cap V^2$. From the construction of $G'$, 
$N_G^1[v_i^2] \cap L \supseteq V^2 \cup \{v_1^3\}$, which implies $|N_G^1[v_i^2] \cap L| \geq \ell$. Therefore, 
$|(\cup_{u \in U} N_G^1[u]) \cap L| \geq \ell$. 

{\bf Case 2:}  Let $U \cap V^1 \neq \emptyset$ and $v_i^1 \in U \cap V^1$. From the constructions of $G'$ and $L$, 
$N_G^1[v_i^1] \cap L \supseteq V^2 \cup \{v_j^1\}$, where $v_j \in D$ is a dominator of $v_i$ in $G$. Therefore, $|N_G^1[v_i^1] \cap L| \geq \ell$, which leads to $|(\cup_{u \in U} N_G^1[u]) \cap L| \geq \ell$. 

{\bf Case 3:} Let $U = V^3$. Again, from the constructions of $G$ and $L$, $(\cup_{u \in U} N_G^1[u]) \cap L \supseteq V^2 \cup \{v_1^3\}$. Therefore, 
in this case also $|(\cup_{u \in U} N_G^1[u]) \cap L| \geq \ell$. 

Thus $L$ is a 1-distance $m$-tuple $(\ell,r)$-dominating set in $G'$ and  $|L| \leq k+\ell$.

Conversely, let $L$ be a 1-distance $m$-tuple $(\ell,r)$-dominating set for $G'$ of size at most $k+\ell$. 
From the definition of the 1-distance $m$-tuple $(\ell,r)$-dominating set and as $|V^3| = r$,  
$|(\cup_{v \in V^3} N^1_{G'}[v] ) \cap L|\geq \ell$. 
Therefore, there must be at least $\ell$ vertices from $V^2 \cup V^3$ in $L$ (see Figure \ref{hardnessfig1}(a)). 
Let $D=\{v_i\in V\mid v_i^1\in L \setminus (V^2 \cup V^3)\}$. 
If $D$ is a dominating set of $G$, then we are done as $|D| \leq k$. 
Suppose $D$ is not a dominating set in $G$. 
Since $|V^2|$ is $\ell-1$, the 1-distance neighborhood of every subset of $V^1$ with cardinality greater than or equal to $r$ will have a non-empty intersection with $D$ (due to the second condition of 1-distance $m$-tuple $(\ell,r)$-domination). This implies, for any subset $U^1$ of $V$, $D \cap (\cup_{u \in U^1}N^1_G[u])=\emptyset$ if and only if $|U^1|\leq r-1$. Note that such a set $U^1 (\neq \emptyset)$ exists 
based on our assumption that $D$ is not a dominating set of $G$. Let $|U^1|=s$. Now, we will show that $|(V^2 \cup V^3) \cap L| \geq \ell+s$. 

Let $U^2 (\subseteq V^2)$ and $U^3 (\subseteq V^3)$ be the maximum size subsets such that $U^2 \cap L = \emptyset$ and 
 $U^3 \cap L = \emptyset$, respectively. Let $s' = |U^2|$ and $s'' = |U^3|$. Let $U_{13} = U^1 \cup U^3$. Since 
$(\cup_{u \in U_{13}}N^1_{G'}[u])\cap L = V^2\setminus U^2$, i.e., $|(\cup_{u \in U_{13}}N^1_{G'}[u])\cap L| = \ell-1-s' < \ell$, 
$|U_{13}| = s+s'' < r$. Add $r-s-s''$ vertices from $V^3\setminus U^3$ to the vertex set $U_{13}$. Now, by the definition of $L$ 
the size of the set $|(\cup_{u \in U_{13}}N_{G'}^1[u])\cap L|$ must be at least $\ell$. Therefore, $r-s-s'' + \ell-1-s' \geq \ell$, 
which implies $r-s'' \geq s+s'+1$.

Since $r-s'' \geq s+s'+1$, $|D|\leq k-s$. Let $D_1 = D \cup U^1$. So, every vertex in $V$ is dominated by at least one vertex in $D_1$ whose size is at most $k$. Therefore, we conclude, the decision version of 1-distance $m$-tuple $(\ell,r)$-domination problem is NP-complete. 
\end{proof}

\subsection{Hardness of the $d$-distance $m$-tuple {$(\ell,2)$-domination} problem}\label{hardness2}
In this section, we show that the decision version of $d$-distance $m$-tuple $(\ell,2)$-domination problem is NP-complete.
 For fixed constant $d \geq 2$, the decision version of the problem is defined as follows.

\begin{description}
\item [Instance:] An undirected connected graph $G = (V,E)$ with $|V|\geq \ell$ and three positive integers $m$, $d$, and $k (\leq |V|)$, where $m \leq \ell$.
\vspace{-0.1in}
\item [Question:] Does $G$ has a $d$-distance $m$-tuple $(\ell,2)$-dominating set of size at most $k$?
\end{description}

We prove that decision version of $d$-distance $m$-tuple $(\ell,2)$-domination problem ($d \geq 2$) is NP-complete by reducing the decision 
version of the 1-distance $m$-tuple $(\ell,2)$-domination problem to it in polynomial time. Note that 1-distance $m$-tuple 
$(\ell,2)$-domination problem is NP-complete (see Section \ref{hardness1}). Recall, the decision version of 1-distance $m$-tuple $(\ell,2)$-domination problem: 

\begin{description}
 \item [Instance:] An undirected connected graph $G = (V,E)$ with $|V|\geq \ell$ and two positive integer $m, k \leq |V|$, where $m \leq \ell$.
\vspace{-0.1in}
 \item [Question:] Does $G$ has a 1-distance $m$-tuple $(\ell,2)$-dominating set of size at most $k$?
\end{description}

\begin{theorem}\label{thm:decissiontheorem}
The decision version of the $d$-distance $m$-tuple $(\ell,2)$-domination problem is NP-complete.
\end{theorem}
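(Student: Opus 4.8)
The plan is to follow the pattern of Theorem~\ref{thm:main1}: membership in NP is immediate, since given any candidate $L\subseteq V'$ one checks in polynomial time that every vertex is $d$-distance dominated by at least $m$ vertices of $L$ and that every pair of vertices is $d$-distance dominated by at least $\ell$ vertices of $L$ (there are only $O(|V'|^2)$ pairs, and all $d$-distances can be precomputed by BFS). For hardness I would reduce from the \textsf{1-distance $m$-tuple $(\ell,2)$-domination} problem, which is NP-complete by Theorem~\ref{thm:main1} (taking $r=2$), preserving the parameters $m$ and $\ell$ and producing a target $k'$ that differs from $k$ by a computable offset.

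The core idea is to \emph{stretch} the host graph so that $1$-distance adjacency is simulated by $d$-distance adjacency. Given an instance $\langle G=(V,E),m,k\rangle$, I would build $G'$ by subdividing every edge $(v_i,v_j)\in E$ into an internally disjoint path of length $d$, i.e.\ inserting $d-1$ fresh internal vertices. Writing $V\subseteq V'$ for the images of the original vertices, a shortest-path argument gives the key distance-scaling identity $\delta_{G'}(v_i,v_j)=d\cdot\delta_G(v_i,v_j)$ for $v_i,v_j\in V$, so that $N^d_{G'}[v_i]\cap V = N^1_G[v_i]$ for every original vertex. This is the property that makes a $d$-distance solution restricted to $V$ behave exactly like a $1$-distance solution in $G$.

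The two conditions of the target problem, however, must be discharged for \emph{all} vertices of $G'$, including the newly created subdivision vertices; this is where the real work lies. An internal vertex of a subdivided edge sees only its two endpoints of $V$ within distance $d$, so for $m\ge 3$ the subdivision vertices cannot be $m$-tuple dominated by original vertices alone. I would therefore attach auxiliary gadgets that (i) lie within $d$-distance of every vertex they are meant to dominate, (ii) force a fixed, accountable number of vertices into every feasible solution so as to satisfy the $m$-tuple and $(\ell,2)$ requirements uniformly (mirroring the role of $V^2$ and $V^3$ in Theorem~\ref{thm:main1}), and (iii) do \emph{not} create shortcuts that would collapse the scaled distances among original vertices. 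Setting $k'=k+c$, where $c$ is the total forced gadget contribution, the forward direction lifts a $1$-distance dominating-type solution of $G$ to $G'$ by adjoining the forced gadget vertices, and the reverse direction uses a normalization/shifting argument: any chosen subdivision vertex can be moved to an adjacent original vertex without losing feasibility or increasing size, after which the restriction of $L$ to $V$ yields a solution of the original instance of size at most $k$.

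The main obstacle I anticipate is precisely reconciling requirements (i) and (iii) for the gadgets: a structure placed within distance $d$ of two original vertices automatically puts them within distance $2d$ of each other, so the gadget geometry must be tuned (through the lengths of the attaching paths and, for even versus odd $d$, the exact rounding in $\ceil{d/2}$) so that non-adjacent original vertices stay strictly beyond distance $d$ while subdivision vertices stay within distance $d$ of their local gadget. Verifying the distance bounds case-by-case, and checking that the shifting argument never needs more vertices than are budgeted in $c$, are the delicate points; once the gadget parameters are fixed, both directions of the correctness proof are routine calculations of the same flavour as Claim~1 in Theorem~\ref{thm:main1}.
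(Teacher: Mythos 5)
Your NP-membership argument and your choice of source problem (the 1-distance $m$-tuple $(\ell,2)$-domination problem, via Theorem~\ref{thm:main1} with $r=2$) match the paper, but your reduction is not the paper's, and as it stands it has a genuine gap: the auxiliary gadgets on which your construction depends are never defined. Subdividing every edge into a path of length $d$ scales distances by $d$, and this is exactly what creates the obstacle you flag: an internal subdivision vertex sees only its two original endpoints within distance $d$, so condition (i) is unsatisfiable by original vertices alone when $m\geq 3$ (and already forces both endpoints of every edge into the solution when $m=2$, turning the problem into something closer to vertex cover). Everything therefore hinges on gadgets meeting your requirements (i)--(iii) simultaneously, and you correctly observe that (i) and (iii) pull against each other: any structure within distance $d$ of the internal vertices of a subdivided edge $(v_i,v_j)$ threatens to create a path between $v_i$ and $v_j$ short enough to collapse the scaled distances. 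Since the gadget design is precisely where all the difficulty of the reduction is concentrated, and it is left as an unspecified step ``to be tuned,'' the proposal is a plan rather than a proof.

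The paper avoids this tension entirely with a different construction: it does not subdivide edges at all. It keeps $G$ intact, so distances between original vertices are preserved rather than scaled, and attaches to each original vertex $v_i'$ a pendant path $v_{i1}',\ldots,v_{i,d-1}'$ of $d-1$ new vertices. Then every vertex on the pendant path of $v_i$ lies within distance $d$ of the copy of every vertex in $N^1_G[v_i]$, so condition (i) transfers for free in the forward direction and no gadgets are needed. In the reverse direction, the pendant tip $v_{i,d-1}'$ plays the role your distance-scaling identity was intended to play: $N^d_{G'}[v_{i,d-1}']$ meets the original copies exactly in $\{v_j' : v_j \in N^1_G[v_i]\}$, and solution vertices lying on pendant paths are moved to adjacent original vertices by the same normalization/shifting argument you describe for subdivision vertices. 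If you want to rescue your own approach you must exhibit concrete gadgets and carry out the case-by-case distance verification you postpone; the pendant-path construction shows that this effort can be avoided altogether.
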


\begin{proof}
The decision version of the $d$-distance $m$-tuple $(\ell,2)$-domination problem is in NP as for a given certificate (a subset of $V$) we can verify whether it is satisfying both the conditions of the $d$-distance $m$-tuple $(\ell,2)$-dominating set or not in polynomial time.

We now describe a polynomial time reduction from an arbitrary instance 
of the decision version of 1-distance $m$-tuple $(\ell,2)$-domination problem to an instance of the 
decision version of the $d$-distance $m$-tuple $(\ell,2)$-domination problem.

Let $G=(V=\{v_1,v_2, \ldots ,v_n\},E)$ be an arbitrary instance of the decision version of 1-distance $m$-tuple 
$(\ell,2)$-domination problem. We construct an instance, a graph $G'=(V',E')$, of the decision version of the 
$d$-distance $m$-tuple $(\ell,2)$-domination problem as follows:\vspace{-0.60cm}

\begin{align*}
V'&=\{v_i'\mid v_i \in V\} \cup (\bigcup_{v_i \in V}\{v_{i1}',v_{i2}',\ldots,v_{id-1}'\})   (see\ Figure\ \ref{hardnessfig1}(b)\ for\ an\ example)\\
E'&=\{(v_i',v_j') \mid (v_i,v_j)\in E\}\cup 
(\bigcup_{v_i \in V}\{(v_i',v_{i1}'),(v_{i1}',v_{i2}'),\ldots,(v_{id-2}',v_{id-1}')\})
\end{align*}\vspace{-0.40cm}

\noindent
{\bf Claim 2:} {\it $G$ has a 1-distance $m$-tuple $(\ell,2)$-dominating set of cardinality at most $k$ if and only if 
$G'$ has a $d$-distance $m$-tuple $(\ell,2)$-dominating set of cardinality at most $k$}.\\
{\bf Necessity:} Let $L$ be a 1-distance $m$-tuple $(\ell,2)$-dominating set of $G$ such that $|L|\leq k$. 
Let $L'=\{v_i' \in V' \mid v_i \in L\}$. We can argue that $L'$ is a $d$-distance $m$-tuple $(\ell,2)$-dominating set in $G'$ and $|L'| \leq k$.
Since $|L'| = |L|$ and $|L|\leq k$, so $|L'| \leq k$. As each vertex $v \in V$ satisfies 
1-distance $m$-tuple $(\ell,2)$-domination properties and each vertex in $G'$ is at most $d-1$ distance 
away from a vertex in $L'$, $L'$ suffices to ensure $d$-distance $m$-tuple $(\ell,2)$-dominating set in graph $G'$ for $d \geq 2$.

{\bf Sufficiency:} Let $L'$ be a $d$-distance $m$-tuple $(\ell,2)$-dominating set in $G'$ such that $|L'|\leq k$. We shall show that, 
by updating (i.e., removing or replacing) some of the vertices in $L'$, at most $k$ vertices from 
$\{v_1',v_2',\ldots,v_n'\}$ can be chosen such that the set of corresponding vertices in $V$ is an  
1-distance $m$-tuple $(\ell,2)$-dominating set in $G$. Let $L''=L'$. For each vertex $v_{ij}' \in V'$, 
$(1 \leq j \leq d-1$ and $1 \leq i \leq n)$ we do the following: if $v_{ij}' \in L''$, then replace it with its 
associated vertex $v_i'$ if $v_i'$ is not already in $L''$, otherwise, replace it with any vertex in 
$N^1_{G'}[v_i'] \cap \{v_1',v_2',\ldots,v_n'\}$ which is not in $L''$. If all the vertices of 
$N^1_{G'}[v_i'] \cap \{v_1',v_2',\ldots,v_n'\}$ are in $L''$ 
(i.e., $(N^1_{G'}[v_i']\cap \{v_1',v_2',\ldots,v_n'\}) \subseteq L'' )$, then remove $v_{ij}'$ from $L''$. 
Therefore, $|L''| \leq k$. Let $L=\{v_i \in V \mid v_i' \in L''\}$. Now, we prove that $L$ is an 
1-distance $m$-tuple $(\ell,2)$-dominating set in $G$ such that $|L|\leq k$. 

Since $|L''|\leq k$, then $|L| \leq k$. We first prove the first condition (i.e., for every $v \in V$, 
$|N^1_G[v]\cap L| \geq m$) of 1-distance $m$-tuple $(\ell,2)$-dominating set. Consider a vertex  $v_i' \in V'$, for some 
$1 \leq i \leq n$, let $s$ be the number of vertices in $L' \cap \{v_{i1}',v_{i2}',\ldots,v_{id-1}'\}$.

{\bf Case 1. $s=0$.}  
Since $L'$ is $d$-distance $m$-tuple $(\ell,2)$-dominating set, there must exist at least $m$ vertices, say $\{v_1'',v_2'', \ldots, v_m''\}$ in $\{v_1',v_2', \ldots,v_n'\}\cap L'$ such that $\{v_1'',v_2'', \ldots , v_m''\} \subseteq N^d_{G'}[v_{i,d-1}']$, otherwise, $L'$ is not a feasible solution as $v_{id-1}'$ does not have $m$ distance-$d$ $(m,\ell)$-dominators. 
Therefore, $|N^1_{G'}[v_i'] \cap (\{v_1',v_2', \ldots,v_n'\} \cap L'')|\geq m$.\\
{\bf Case 2. $s \geq 1$.} 
Let $v_{ij_1}',v_{ij_2}', \ldots, v_{ij_t}'\in L'$, for some $1\leq j_1,j_2, \ldots, j_t\leq d-1$. 
By our construction of $L''$ each vertex in $\{v_{ij_1}',v_{ij_2}', \ldots, v_{ij_t}'\}$ is replaced by one of the vertices in $N^1_{G'}[v_i'] \cap \{v_1',v_2', \ldots,v_n'\}$. 
Therefore, in this case also $|N^1_{G'}[v_i'] \cap (\{v_1',v_2', \ldots,v_n'\}\cap L'')| \geq m$. Thus, by our construction of $L$ from $L''$, $|N^1_G[v_i] \cap L|\geq m$ is true.

Now we prove the second condition of 1-distance $m$-tuple $(\ell,2)$-dominating set (i.e., for every pair of distinct vertices $u,v \in V$, $|(N^1_G[u]\cup N^1_G[v])\cap L| \geq \ell$).

Let $v_i$ and $v_j$ be two distinct vertices in $G$. Consider the vertices $v_{id-1}'$ and $v_{jd-1}'$ in $G'$.  
As $L'$ is a $d$-distance $m$-tuple $(\ell,2)$-dominating set of $G'$, it satisfies the second property of $d$-distance $m$-tuple $(\ell,2)$-domination in $G'$. 
Thus there exist at least $\ell$ dominators dominating $v_{id-1}'$ and $v_{jd-1}'$ in $L'$, i.e., $|(N^d_{G'}[v_{id-1}']\cup N^d_{G'}[v_{jd-1}'])\cap L'|\geq \ell$.
These dominators are either from $N^1_G[v_i'] \cup N^1_G[v_j']$ or from 
$\{v_{i1}',v_{i2}',\ldots,v_{id-1}'\}$ and/or from $\{v_{j1}',v_{j2}' \ldots,v_{jd-1}'\}$.
As per our construction of $L''$ from $L'$, we are replacing each dominator in $\{v_{i1}',v_{i2}',\ldots,v_{id-1}'\} \cup \{v_{j1}',v_{j2}' \ldots,v_{jd-1}'\} $ (if any) by a vertex in $(N^1_{G'}[v_i'] \cup N^1_{G'}[v_j']) \cap \{v_1',v_2',\ldots,v_n'\}$.

Since $G$ is connected and $|V|\geq \ell$, so is $G'$. Therefore, $L''$ contains  at least $\ell$ vertices from $(N^1_{G'}[v_i'] \cup N^1_{G'}[v_j']) \cap \{v_1',v_2',\ldots,v_n'\}$, i.e., $|(N^1_{G'}[v_i'] \cup N^1_{G'}[v_j']) \cap \{v_1',v_2',\ldots,v_n'\} \cap L''| \geq \ell$. 
Therefore, according to the construction of $L$ from $L''$, $|(N^1_G[v_i] \cup N^1_G[v_j]) \cap L| \geq \ell$. 
Thus, $L$ is a 1-distance $m$-tuple $(\ell,2)$-dominating set of the graph $G$ having cardinality at most $k$.

 Therefore, the decision version of $d$-distance $m$-tuple $(\ell,2)$-domination problem is NP-complete.
\end{proof}
\vspace{-0.70cm}
\section{Inapproximability results}\label{inapproximability}

\subsection{Inapproximability of the 1-distance $m$-tuple $(\ell,r)$-domination problem} \label{inapproximability1}
In this section, we prove that the 1-distance $m$-tuple $(\ell,r)$-domination problem cannot be approximated 
within a factor of $(\frac{1}{2}- \varepsilon) \ln(|V|)$ for any $\varepsilon >0$, unless P = NP.
We argue the claim by showing that if 1-distance $m$-tuple $(\ell,r)$-domination problem can be approximated within a factor of $(\frac{1}{2}- \varepsilon) \ln(|V|)$ for any $\varepsilon >0$ in a graph $G'$, then the domination problem can be approximated within a factor of $(1- \varepsilon) \ln(|V|)$ for any $\varepsilon >0$. 

\begin{theorem} \cite{dinur} \label{thm:dinur}
For every $\varepsilon >0$, it is NP-hard to approximate set cover problem within a factor of $(1- \varepsilon) \ln n$, where $n$ is the size of the instance. The reduction runs in $n^{O(1/\varepsilon)}$ time. 
\end{theorem}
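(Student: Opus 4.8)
The plan is to establish this known result of Dinur and Steurer by a gadget reduction from a hard two-prover one-round game (equivalently, a projection Label Cover instance) supplied by the PCP theorem, amplified so that completeness is (nearly) $1$ and soundness is an arbitrarily small constant $\gamma$. I would take as the starting point a Label Cover instance on a bipartite constraint graph with left questions $Q_1$, right questions $Q_2$, answer alphabets $A_1, A_2$, and projection constraints $\pi_e : A_1 \to A_2$; NP-hardness of distinguishing value $1$ from value $\gamma$ follows from the PCP theorem together with alphabet and soundness amplification. The whole difficulty lies in performing this amplification cheaply enough that the final reduction runs in time $n^{O(1/\varepsilon)}$, which is exactly what upgrades the conclusion from the quasi-polynomial-time assumption used in Feige's earlier $(1-\varepsilon)\ln n$ bound to the bare assumption $P \ne NP$.

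The combinatorial heart is Feige's partition system $B(m,L,d)$: a universe of $m$ points together with $d$ partitions, each splitting the universe into $L$ blocks, with the covering property that a selection of one block from each of fewer than $\approx (1-\varepsilon) L \ln m$ distinct partitions cannot cover the universe, whereas the $L$ blocks of a single partition cover it exactly. I would obtain such a system by a probabilistic construction (assign each point a uniformly random block in each partition, then bound the uncovered-point probability $m(1-1/L)^{|I|}$ by a union bound over all small selections $I$), choosing $m, L, d$ polynomially bounded so the whole object has size $\mathrm{poly}(n)$. The set cover instance then has universe $(\text{edges}) \times [m]$, one partition-system copy per edge, and sets $S_{q,a}$ indexed by question-answer pairs; the set $S_{q,a}$ collects, in each incident edge-copy, the block of the partition dictated by the projection constraint and the answer $a$.

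Next I would verify the two directions. For completeness, a satisfying labeling assigns each question a single answer; the corresponding sets cover every edge-copy by a full partition's worth of blocks, yielding a cover of size essentially $|Q_1| + |Q_2|$. For soundness, I would argue the contrapositive: any cover cheaper than $(1-\varepsilon)\ln m$ times the honest cost must, on most edge-copies, use nearly a single partition's blocks, and reading off the most-used answer at each question produces a labeling whose consistency across shared edges contradicts the soundness bound $\gamma$ of the game. Thus deciding a gap of factor $(1-\varepsilon)\ln m = (1-\varepsilon)\ln N - o(\ln N)$ is NP-hard, where $N$ is the total number of points.

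The step I expect to be the genuine obstacle, and the one that separates Dinur–Steurer from the earlier Lund–Yannakakis and Feige lineage, is making the soundness analysis survive a polynomial-size reduction. Classical arguments route soundness through Raz's parallel repetition to drive the game value down, but the polynomial blowup of repetition caps the achievable hardness and forces a quasi-polynomial-time assumption. The remedy I would follow is the analytic, direct-product approach: analyze the covering value of the set cover instance directly as an average over edge-copies, bounding it by an information-theoretic argument on the partition system rather than a black-box game-value bound, so that the honest-versus-inconsistent cover gap equals $(1-\varepsilon)\ln m$ without repetition losses. Balancing $\varepsilon$, the alphabet size, and the soundness $\gamma$ against one another then pins the running time at $n^{O(1/\varepsilon)}$ and yields exactly the stated constant $(1-\varepsilon)$ under $P \ne NP$.
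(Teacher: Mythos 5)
First, note that the paper does not prove this statement at all: Theorem~\ref{thm:dinur} is imported verbatim from Dinur and Steurer \cite{dinur}, so your attempt has to be measured against the actual literature rather than against any argument in the text. Your scaffolding is the correct and standard Lund--Yannakakis/Feige skeleton: projection Label Cover from the PCP theorem, a partition system $B(m,L,d)$ with the property that blocks from fewer than $(1-\varepsilon)L\ln m$ distinct partitions cannot cover the universe, a set cover instance with universe $(\text{edges})\times[m]$ and sets $S_{q,a}$, completeness via a satisfying labeling, and soundness by decoding the most popular answer at each question. You also correctly identify where the difficulty lives: Feige obtained $(1-\varepsilon)\ln n$ only under a quasi-polynomial-time assumption precisely because the amplification needed to drive the game's soundness down blows the instance up super-polynomially, and the whole point of \cite{dinur} is to make that amplification cost $n^{O(1/\varepsilon)}$.

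The genuine gap is that your proposed remedy for this obstacle is a wish rather than an argument, and it also misdescribes what Dinur and Steurer actually do. You claim one can avoid repetition altogether by ``analyzing the covering value directly \dots\ by an information-theoretic argument on the partition system rather than a black-box game-value bound, so that the gap equals $(1-\varepsilon)\ln m$ without repetition losses''; no such analysis is given, and none is known --- the soundness decoding step inherently needs a two-prover game of very small soundness error (roughly $2^{-\Theta(1/\varepsilon)}$), which a single constant-gap PCP instance does not supply. Dinur and Steurer do \emph{not} bypass parallel repetition: their contribution \emph{is} a parallel repetition theorem, proved analytically by relating the value of a projection game to norms of an associated linear operator (a ``collision value'' relaxation with vector-valued strategies and a Cheeger-type rounding), whose crucial feature is that the exponential decay of $\mathrm{val}(G^{\otimes k})$ is independent of the alphabet size. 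That alphabet-independence is exactly what permits taking $k=O(1/\varepsilon)$ repetitions at instance size $n^{O(k)}=n^{O(1/\varepsilon)}$ before plugging into the Feige-style set cover machinery; your sketch leaves this theorem --- the entire content of the cited result --- as an unproved black box. Two smaller but real defects: a probabilistic construction of $B(m,L,d)$ does not yield an NP-hardness \emph{reduction} until it is derandomized (by conditional expectations, or the explicit systems of Naor--Schulman--Srinivasan), and you must choose $m$ polynomially large relative to the number of edge-copies so that $\ln m=(1-o(1))\ln N$, a bookkeeping step that is itself part of where the $n^{O(1/\varepsilon)}$ size arises and which your sketch does not carry out.
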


\begin{theorem} \label{thm:inapprox}
Minimum domination problem cannot be approximated within a factor of $(1- \varepsilon) \ln(|V|)$ for any $\varepsilon >0$, unless P = NP.
\end{theorem}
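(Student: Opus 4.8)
The plan is to give an approximation-preserving reduction from \textsc{Set Cover} to \textsc{Minimum Domination} and then invoke the \textsc{Set Cover} inapproximability of Theorem~\ref{thm:dinur}. Given a set cover instance with universe $U=\{u_1,\ldots,u_p\}$ and family $\mathcal{S}=\{S_1,\ldots,S_q\}$ (we may assume every element lies in some set, else the instance is infeasible), I would build a graph $G=(V,E)$ with one \emph{element vertex} $a_i$ for each $u_i$ and one \emph{set vertex} $b_j$ for each $S_j$; I connect $a_i$ to $b_j$ exactly when $u_i\in S_j$, and I add all edges among the set vertices so that $\{b_1,\ldots,b_q\}$ induces a clique. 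Thus $|V|=p+q$ and $G$ is clearly constructible in polynomial time.

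First I would establish the exact correspondence $\mathrm{OPT}_{DS}(G)=\mathrm{OPT}_{SC}$ together with efficient two-way conversions of solutions. In one direction, any cover $\mathcal{C}\subseteq\mathcal{S}$ yields the dominating set $\{b_j\mid S_j\in\mathcal{C}\}$: these set vertices dominate every element vertex because $\mathcal{C}$ covers $U$, and they dominate every set vertex because the set vertices form a clique. In the other direction, given a dominating set $D$ I would first replace each element vertex $a_i\in D$ by an arbitrary set vertex $b_j$ with $u_i\in S_j$; this does not increase $|D|$ and keeps the set dominating, after which $D$ consists only of set vertices. Since each $a_i$ must then have a neighbour in $D$, the corresponding sets cover $U$, giving a cover of size at most $|D|$. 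These conversions prove both inequalities, so the optimum values coincide and any feasible dominating set turns into a cover of no larger size.

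The transfer of the approximation factor then follows: if \textsc{Minimum Domination} admitted a polynomial-time $(1-\varepsilon)\ln|V|$-approximation $A$, running $A$ on $G$ and converting its output would produce a cover of size at most $(1-\varepsilon)\ln|V|\cdot\mathrm{OPT}_{DS}(G)=(1-\varepsilon)\ln|V|\cdot\mathrm{OPT}_{SC}$. The point that needs care is that $\ln|V|$ must not exceed $\ln n$, where $n$ is the size of the set cover instance in Theorem~\ref{thm:dinur}: since the instance must at least name each of its $p$ elements and $q$ sets, we have $n\ge p+q=|V|$, hence $\ln|V|\le\ln n$. Consequently the produced cover has size at most $(1-\varepsilon)\ln n\cdot\mathrm{OPT}_{SC}$, i.e.\ we would obtain a polynomial-time $(1-\varepsilon)\ln n$-approximation for \textsc{Set Cover}, contradicting Theorem~\ref{thm:dinur} unless $P=NP$.

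The step I expect to be the main obstacle is precisely this last size bookkeeping: the reduction inflates a universe of $p$ elements to a graph on $p+q$ vertices, so one must argue that the extra set vertices do not weaken the logarithmic factor. The clique on the set vertices is what keeps the two optima exactly equal (introducing no additive or multiplicative slack into $\mathrm{OPT}$), while bounding $|V|$ by the instance size is what keeps $\ln|V|\le\ln n$; together they guarantee that the $(1-\varepsilon)$ constant passes through the reduction unchanged.
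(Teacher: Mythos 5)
Your proposal is correct and follows essentially the same route as the paper: the paper's proof is a one-line citation of ``the relation between the set cover and dominating set problems'' together with Theorem~\ref{thm:dinur}, and your argument simply spells out that standard relation (element vertices, set vertices forming a clique, exact equality of optima, and the bound $|V| \le n$ so the $(1-\varepsilon)\ln n$ factor transfers intact). The details you supply, including the solution conversions in both directions, are exactly what the paper's citation-style proof leaves implicit.
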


\begin{proof}
The result follows from (i) the relation between set cover problem and dominating set problem, (ii) Theorem 
\ref{thm:dinur}, and (iii) the inapproximability result in \cite{chleb}.
\end{proof}

\begin{theorem}\label{thm:own}
Minimum 1-distance $m$-tuple $(\ell,r)$-domination problem cannot be approximated within a factor of 
$(\frac{1}{2}- \varepsilon) \ln(|V|)$ for any $\varepsilon >0$, unless P = NP.
\end{theorem}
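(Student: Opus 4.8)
The plan is to establish the inapproximability of the 1-distance $m$-tuple $(\ell,r)$-domination problem by an approximation-preserving reduction from the minimum domination problem, exploiting the gadget construction already developed in the proof of Theorem~\ref{thm:main1}. By Theorem~\ref{thm:inapprox}, minimum domination cannot be approximated within $(1-\varepsilon)\ln|V|$ unless $P=NP$; the goal is to transfer this hardness, losing at most a factor of $2$ in the process, so that an approximation ratio of $(\frac{1}{2}-\varepsilon)\ln|V'|$ for our problem would contradict it.

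First I would reuse the construction $G'=(V',E')$ from Theorem~\ref{thm:main1}: from a domination instance $G=(V,E)$ with $|V|=n$, build $G'$ with vertex set $V^1\cup V^2\cup V^3$, so that $|V'|=n+\ell+r-1$. The crucial quantitative facts are already proved there: $G$ has a dominating set of size at most $k$ if and only if $G'$ has a 1-distance $m$-tuple $(\ell,r)$-dominating set of size at most $k+\ell$, and moreover any such dominating set $L$ of $G'$ yields a dominating set of $G$ of size at most $|L|-\ell$ (this is the content of the converse direction of Claim~1). The second step is to convert this bijection-up-to-additive-constant into a statement about optimal values: writing $\gamma(G)$ for the domination number and $\gamma^*(G')$ for the minimum $(d,m,\ell,r)$ size, the above gives $\gamma^*(G')=\gamma(G)+\ell$.

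Next I would argue the ratio preservation. Suppose, for contradiction, that there is a polynomial-time algorithm producing a 1-distance $m$-tuple $(\ell,r)$-dominating set $L$ of $G'$ with $|L|\le (\frac{1}{2}-\varepsilon)\ln|V'|\cdot\gamma^*(G')$. Applying the polynomial transformation from the converse of Claim~1 yields a dominating set $D$ of $G$ with $|D|\le|L|-\ell$. The plan is then to bound $|D|$ in terms of $\gamma(G)$: since $\gamma^*(G')=\gamma(G)+\ell$ and $\ell,r<n$, for large $n$ we have $\ln|V'|=\ln(n+\ell+r-1)\le(1+o(1))\ln n$, and the additive $\ell$ is dominated by the multiplicative slack. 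Choosing $\varepsilon$ appropriately and absorbing lower-order terms, the factor $(\frac{1}{2}-\varepsilon)$ against $|V'|$ translates into a factor strictly below $(1-\varepsilon')\ln|V|$ against $\gamma(G)$ for some $\varepsilon'>0$, contradicting Theorem~\ref{thm:inapprox}.

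The main obstacle I anticipate is the bookkeeping that turns the additive gap $\gamma^*(G')=\gamma(G)+\ell$ into a clean multiplicative statement: one must verify that the extra $+\ell$ vertices and the enlarged vertex count $|V'|=n+\ell+r-1$ together cost only a constant factor (here the factor $2$) rather than destroying the logarithmic hardness. The reason the loss is exactly $\frac{1}{2}$ is that $\gamma(G)$ can be as small as a constant while $\ell$ is a fixed constant too, so in the worst case the additive term is comparable to the optimum and halves the effective ratio; I would make this precise by treating the regime where $\gamma(G)$ grows with $n$ (so the additive $\ell$ is negligible and the ratio is essentially preserved) separately from the bounded-$\gamma(G)$ regime (which is polynomial-time solvable and hence irrelevant to the asymptotic hardness). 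Assembling these two regimes gives the claimed $(\frac{1}{2}-\varepsilon)\ln|V|$ lower bound.
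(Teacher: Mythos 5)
Your proposal is correct and follows essentially the same route as the paper's own proof: you reuse the construction of Theorem~\ref{thm:main1}, establish the same identity $\gamma_{m \ell r}(G') = \gamma_{ds}(G) + \ell$, incur the same factor-$2$ loss from the additive $\ell$, and absorb the blow-up $|V'| = n + \ell + r - 1 \leq 3n$ inside the logarithm to contradict Theorem~\ref{thm:inapprox}. If anything, your explicit two-regime treatment (brute force when $\gamma_{ds}(G)$ is bounded, ratio preservation when it grows) is slightly more careful than the paper's inequality $\frac{|D|+\ell}{|D^*|+\ell} \geq \frac{1}{2}\,\frac{|D|}{|D^*|}$, which tacitly requires $|D^*| \geq \ell$.
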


\begin{proof}
Let $G$ be a simple graph.
Consider the construction of the graph $G'$ for any given graph $G$ as discussed in Section \ref{hardness1}.
As per our construction, we proved that each instance of domination problem can be 
reducible to an instance of 1-distance $m$-tuple $(\ell,r)$-domination problem in polynomial-time . \\
Let $D^*$ and $L^*$ be the optimal DS and 1-distance $m$-tuple $(\ell,r)$-dominating set in $G$ and $G'$, 
with cardinalities $\gamma_{ds}(G)$  and $\gamma_{m \ell r}(G')$, respectively. Now we can argue the following 
claim: $\gamma_{m \ell r}(G')=\gamma_{ds}(G)+\ell$. The inequality $\gamma_{m \ell r}(G')\leq \gamma_{ds}(G)+\ell$ 
is trivial as per our construction in Section \ref{hardness1}. On the other hand, 
$\gamma_{m \ell r}(G')\geq \gamma_{ds}(G)+\ell$ follows from the sufficiency proof of Claim 1 in Section \ref{hardness1}. 
So given a dominating set $D$ of $G$, one can find a 1-distance $m$-tuple $(\ell,r)$-dominating set $L$ of $G'$ such that 
$|L|=|D|+\ell$. Now, $\frac{|L|}{|L^*|}=\frac{|D|+\ell}{|D^*|+\ell} \geq \frac{1}{2} \frac{|D|}{|D^*|}$. Suppose 
there exists a polynomial time algorithm that approximates 1-distance $m$-tuple $(\ell,r)$-domination problem within 
a factor of $(\frac{1}{2}- \varepsilon) \ln N$ for graphs with $N$ vertices. As per our construction of the graph $G'$ from 
$G$ (see Figure \ref{hardnessfig1}(a)), $G'$ contains, $N=n+\ell+r - 1 \leq 3n$ for  $n \geq 2$ vertices, where $n$ is the 
total number of vertices in $G$, $\ell < n$, and $r < n$. Therefore,\vspace{-0.50cm}

\begin{align*}
\frac{|D|}{|D^*|}\leq  (1- 2\varepsilon) \ln N \leq (1- 2\varepsilon) \ln n(1+ \frac{\ln 4}{\ln n}).
\end{align*}\vspace{-0.50cm}

For sufficiently large $n$, the term $(1+ \frac{\ln 4}{\ln n})$ can be bounded by $1+\frac{\varepsilon}{5}$, 
where $\varepsilon \geq \frac{5 \ln 4}{\ln n}$. Now we have\vspace{-0.70cm}

\begin{align*}
(1- 2\varepsilon) \ln n(1+ \frac{\ln 4}{\ln n}) \leq (1- 2\varepsilon)[\ln n + \ln (1+ \frac{\varepsilon}{5})] \leq 
(1- 2\varepsilon)[\ln n +  \frac{\varepsilon}{5} \ln n] \leq (1- \varepsilon')\ln n, 
\end{align*}\vspace{-0.50cm}

where $\varepsilon' < \frac{9}{5} \varepsilon+ \frac{2}{5}\varepsilon^2$. Therefore, for an arbitrary graph, we can 
approximate the domination problem by a factor of $(1-\varepsilon')\ln n$, which leads to a contradiction to 
Theorem \ref{thm:inapprox}. Thus, the minimum 1-distance $m$-tuple $(\ell,r)$-domination problem cannot 
be approximated within a factor of $(\frac{1}{2}- \varepsilon) \ln(|V|)$ for any $\varepsilon >0$, unless P = NP.
\end{proof}

\subsection{Inapproximability of the $d$-distance $m$-tuple $(\ell,2)$-domination problem}\label{inapproximability2}

In this section, we give a lower bound on the approximation ratio of any approximation algorithm for the 
$d$-distance $m$-tuple $(\ell,2)$-domination problem by providing an approximation preserving reduction from the 
1-distance $m$-tuple $(\ell,r)$-domination problem for $r=2$.

\begin{theorem}\label{thm:inaprox}
Given a simple undirected graph $G=(V,E)$, the $d$-distance $m$-tuple $(\ell,2)$-domination problem cannot be approximated within a 
factor of $(\frac{1}{4}- \varepsilon)\ln |V|$, for any fixed constant $d\geq 2$ and $\varepsilon>0$, unless P = NP.
\end{theorem}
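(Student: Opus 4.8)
The plan is to turn the value-preserving reduction of Section~\ref{hardness2} into an approximation-preserving one and feed it the inapproximability of the $1$-distance problem from Theorem~\ref{thm:own} (specialized to $r=2$). Concretely, I would show that a polynomial-time $(\frac14-\varepsilon)\ln|V|$-approximation for the $d$-distance $(\ell,2)$-problem would produce, on the instances built in Section~\ref{hardness2}, a $(\frac12-\varepsilon')\ln|V|$-approximation for the $1$-distance $(\ell,2)$-problem, contradicting Theorem~\ref{thm:own}.

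First I would record the two properties of the Section~\ref{hardness2} reduction that make it approximation-preserving. Recall that from a $1$-distance instance $G=(V,E)$ with $|V|=n$ it produces $G'=(V',E')$ by attaching a pendant path of $d-1$ fresh vertices to each $v_i$, so that $|V'|=nd$. Writing $\gamma_1(G)$ and $\gamma_d(G')$ for the respective optimal domination numbers, Claim~2 gives $\gamma_d(G')=\gamma_1(G)$. Moreover, the sufficiency direction of Claim~2 is constructive and size--non-increasing: from any $d$-distance $(\ell,2)$-dominating set $L'$ of $G'$ it returns a $1$-distance $(\ell,2)$-dominating set $L$ of $G$ with $|L|\le|L'|$, since each chosen path-vertex is either replaced by a vertex of $\{v_1',\dots,v_n'\}$ or deleted. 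These two facts are exactly what an approximation-preserving reduction needs.

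Next I would run a hypothetical approximation algorithm through this reduction. Suppose, for contradiction, that some polynomial-time algorithm outputs, on every $N$-vertex graph, a $d$-distance $(\ell,2)$-dominating set of size at most $(\frac14-\varepsilon)\ln N$ times the optimum. Given a $1$-distance instance $G$ with $n$ vertices, build $G'$ (with $N=nd$ vertices) in polynomial time, run the algorithm to obtain $L'$, and convert it to $L$ with $|L|\le|L'|$. Using $\gamma_d(G')=\gamma_1(G)$ we get
\[
\frac{|L|}{\gamma_1(G)}\;\le\;\Bigl(\tfrac14-\varepsilon\Bigr)\ln(nd)\;=\;\Bigl(\tfrac14-\varepsilon\Bigr)\Bigl(1+\tfrac{\ln d}{\ln n}\Bigr)\ln n .
\]
Since $d$ is a fixed constant, for all sufficiently large $n$ the factor $\bigl(1+\frac{\ln d}{\ln n}\bigr)$ is below any prescribed $1+\delta$, so the ratio is at most $(\frac14-\varepsilon)(1+\delta)\ln n$, which for small $\delta$ stays strictly below $\frac12\ln n$ and is therefore bounded by $(\frac12-\varepsilon')\ln n$ for some $\varepsilon'>0$. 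This is precisely the approximation ratio ruled out by Theorem~\ref{thm:own}, the desired contradiction.

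The only delicate point---mirroring the $(1+\frac{\ln 4}{\ln n})$ estimate in the proof of Theorem~\ref{thm:own}---is the bookkeeping that converts a bound stated in $\ln|V'|=\ln(nd)$ into one in $\ln n$ while retaining a genuine $\varepsilon'>0$ after the constant $\ln d$ is absorbed; the generous gap between $\frac14$ and $\frac12$ leaves ample room for this, which is why the stated constant can be taken conservatively as $\frac14$. The one substantive thing to verify is that the sufficiency construction of Claim~2 never increases the solution size, as any additive slack there would break the clean composition; granting that, the value-preserving reduction combined with the $\frac12$-type hardness of the $1$-distance problem yields the claimed $(\frac14-\varepsilon)\ln|V|$ lower bound.
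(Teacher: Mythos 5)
Your proposal is correct and follows essentially the same route as the paper: the same Section~\ref{hardness2} reduction, the same value-preservation claim $\gamma_{m\ell}^d(G')=\gamma_{m\ell}(G)$ (using the size-non-increasing conversion from the sufficiency direction of Claim 2), and the same contradiction with Theorem~\ref{thm:own} specialized to $r=2$. The only difference is arithmetic bookkeeping: the paper bounds $|V'|=nd\le n^2$ so that $\ln|V'|\le 2\ln n$ (which is exactly where its constant $\tfrac14$ comes from), whereas you absorb $\ln d$ as an additive constant into a $(1+\delta)$ factor, a tighter estimate that in fact shows the constant $\tfrac14$ could be pushed up to nearly $\tfrac12$ for fixed $d$.
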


\begin{proof}
Let $G=(V,E)$ be an arbitrary instance of the 1-distance $m$-tuple $(\ell,2)$-domination problem with $n$ vertices.
Given $G=(V,E)$, we construct a graph $G'=(V',E')$, an instance of the $d$-distance $m$-tuple $(\ell,2)$-domination problem as described in Section \ref{hardness2}. Let $L^*$ and $L_d^*$ be the optimal 1-distance $m$-tuple $(\ell,2)$-dominating set and 
$d$-distance $m$-tuple $(\ell,2)$-dominating set in $G$ and $G'$, with cardinalities $\gamma_{m \ell}(G)$  and $\gamma_{m \ell}^d(G')$, respectively.
Now we can argue the following claim: $\gamma_{m \ell}^d(G')=\gamma_{m \ell}(G)$.
The inequality $\gamma_{m \ell}^d(G')\leq \gamma_{m \ell}(G)$ is trivial as every 1-distance $m$-tuple $(\ell,2)$-dominating set 
of $G$ is a $d$-distance $m$-tuple $(\ell,2)$-dominating set in $G'$. On the other hand, $\gamma_{m \ell}^d(G')=|L_d^*|\geq |L|$ follows from the 
sufficiency proof of Claim 2 in Section \ref{hardness2}. \\
Given any 1-distance $m$-tuple $(\ell,2)$-dominating set $L$ of $G$, one can find a $d$-distance $m$-tuple $(\ell,2)$-dominating set $L_d$ 
of $G'$ with $|L_d|=|L|$. Suppose there exist a polynomial time algorithm to approximate $d$-distance $m$-tuple $(\ell,2)$-domination problem 
within a factor of $(\frac{1}{4}-\varepsilon)\ln |V'|$, where $|V'| = n + n(d-1) \leq n^2$ (see Section \ref{hardness2}).
Now $\frac{|L|}{|L^*|}=\frac{|L_d|}{|L_d^*|}\leq (\frac{1}{4}-\varepsilon)\ln n^2 = (\frac{1}{2}-2\varepsilon)\ln n\leq (\frac{1}{2}-\varepsilon')\ln n$, where $\varepsilon' \leq 2\varepsilon$. 
Therefore, the result follows from Theorem \ref{thm:own}.
\end{proof}

 \section{Conclusion}\label{sec:conclusion}
In this article, we studied $d$-distance $m$-tuple ($\ell, r$)-domination problem. We provided a common NP-completeness 
proof of the 1-distance $m$-tuple ($\ell, r$) domination problem for each fixed value of $m, \ell$, and 
$r$. We also presented a common NP-completeness proof of the $d$-distance $m$-tuple  ($\ell, 2$) domination 
problem for each fixed value of $d (> 1), m$, and $\ell$. We have showed that the first problem is not approximated 
within a factor of $(\frac{1}{2}- \varepsilon)\ln |V|$ for each fixed value of $m, \ell$, and $r$, unless P = NP 
and the second problem is not approximated within a factor of $(\frac{1}{4}- \varepsilon)\ln |V|$ for each fixed value of  $d (> 1), m$, and $\ell$, unless P = NP, where $V$ is the vertex set of the input graph. The reduction in the NP-completeness/inapproximability proofs are very powerful as these are common reductions for completely different kind of dominations. 


\vspace{-0.15in}
\bibliographystyle{plain}

\end{document}